\documentclass[letterpaper, 10pt, conference]{ieeeconf}      % Use this line for a4
                                                          % paper
% This command is only
                                                          % needed if you want to
                                                          % use the \thanks command
% See the \addtolength command later in the file to balance the column lengths
% on the last page of the document

% The following packages can be found on http:\\www.ctan.org
% \usepackage{graphics} % for pdf, bitmapped graphics files
%\usepackage{epsfig} % for postscript graphics files
%\usepackage{mathptmx} % assumes new font selection scheme installed
%\usepackage{times} % assumes new font selection scheme installed
\IEEEoverridecommandlockouts
\linespread{0.97}
\parskip 0pt

\usepackage{verbatim}
\usepackage{epsfig}
\usepackage{amsmath}
\usepackage{amsthm}
\usepackage{amssymb}
\usepackage{psfrag}
\usepackage[T1]{fontenc}
\usepackage[ruled]{algorithm2e}

\usepackage[usenames,dvipsnames]{pstricks}
\usepackage{pst-grad} % For gradients
\usepackage{pst-plot} % For axes
\usepackage{etoolbox}
\usepackage{subfig}
\usepackage{url}
\usepackage[noadjust]{cite}  % optionally

\makeatletter
\patchcmd{\@makecaption}
  {\scshape}
  {}
  {}
  {}
\makeatother
\bibliographystyle{IEEEtran}

\DeclareMathOperator*{\minimize}{minimize}

\newtheorem{theorem}{Theorem}
\newtheorem{lemma}{Lemma}

\newtheorem{assump}{Assumption}

\newtheorem{proposition}{Proposition}

% Uncomment the following line if demand notation is D_c
%\newcommand{\D}{D_{c}}
% Uncomment the following line if demand notation is P_\ell
\newcommand{\D}{D_c}
\newcommand{\Dl}{D_\ell}
\newcommand{\De}{D_e}

\usepackage{dsfont}

\title{\LARGE \bf
On Mitigating the Uncertainty in Renewable Generation in Distribution Microgrids
}

\author{Arnab Dey$^{1}$, Vivek Khatana$^{1}$, Ankur Mani$^{2}$  and Murti V. Salapaka$^{1}$
\thanks{This work is supported by Advanced Research Projects Agency-Energy OPEN through the project titled "Rapidly Viable Sustained Grid" via grant no. DE-AR0001016.}
\thanks{$^{1}$ Arnab Dey \{{\tt\small dey00011@umn.edu}\}, Vivek Khatana \{{\tt\small khata010@umn.edu}\}, Murti V. Salapaka\{{\tt\small murtis@umn.edu}\} are with Department of Electrical and Computer Engineering, University of Minnesota, Twin Cities, USA, and $^{2}$ Ankur Mani \{{\tt\small amani@umn.edu\}} is with Department of Department of Industrial and Systems Engineering, University of Minnesota, Twin Cities, USA,
}
}

\begin{document}

\maketitle
\thispagestyle{empty}
\pagestyle{empty}

%%%%%%%%%%%%%%%%%%%%%%%%%%%%%%%%%%%%%%%%%%%%%%%%%%%%%%%%%%%%%%%%%%%%%%%%%%%%%%%%
\begin{abstract}

In this article, we focus on the problem of mitigating the risk of not being able to meet the power demand, due to the inherent uncertainty of renewable energy generation sources in microgrids. We consider three different demand scenarios, namely meeting short-time horizon power demand, a sustained energy demand and a scenario where the power demand at a prescribed future time has to be met with almost sure guarantee with power generation being stochastic and following dynamics governed by geometric Brownian motion. For each of these scenarios we provide solutions to meet the electrical demand. We present results of numerical experiments to demonstrate the applicability of our schemes. \\\\
% Made indexes alphabetical
\textit{Index terms:} Microgrids, optimization, partial differential equations, photovoltaic, renewable energy sources, uncertainty minimization, wind energy.
\end{abstract}
%%%%%%%%%%%%%%%%%%%%%%%%%%%%%%%%%%%%%%%%%%%%%%%%%%%%%%%%%%%%%%%%%%%%%%%%%%%%%%%%
\section{INTRODUCTION}
In recent times, there is a mounting interest towards the generation and utilization of clean renewable energies owing to the adverse environmental effects and fast depletion of traditional energy sources \cite{shrestha1998study,miranda2015holistic}. The advent of microgrids has provided a flexible framework for the interconnection of renewable energy sources (RES) like solar photo-voltaic (PV) systems and wind energy systems \cite{parhizi2015state}. However, the integration of renewable energy sources introduces uncertainty of  meeting the electricity demand. In particular, due to the uncertain and intermittent nature of the renewable energy sources, maintaining the balance between power supply and demand can become challenging if extra measures such as ancillary services are not present \cite{irena, yang2018battery}. Moreover, climate related catastrophic events are increasing in frequency and magnitude \cite{kishore2018mortality,campbell2012weather}; here, microgrid operation of critical infrastructures such as hospitals, powered partly by renewable energy sources provides an attractive solution. However, for such critical infrastructures, it is important to guarantee needed power and thus managing uncertainty of renewable energy sources needs to be addressed. 

Due to the stochastic nature of the solar radiation and wind, probabilistic approaches are used to model the renewable power output \cite{verdejo2016stochastic,olsson2010modeling, dong2016application,salameh1995photovoltaic}. Several tasks, such as electrification of remote areas and recovery from natural disasters, require hybrid renewable energy systems (HRES) to be operated in an islanded mode, where either the grid has become unreliable or is not available. Here, optimal allocation of renewable sources and ancillary battery energy storage systems (BESS) is desired \cite{faccio2018state, birnie2014optimal,olatomiwa2016energy}. Many researchers have proposed optimization techniques focusing on overall investment and operational cost reduction \cite{zhang2016optimal,maleki2016optimal,geem2012size,ghaffari2015energy,hedman2006comparing}. However, as the BESS and RES become economically viable, primarily due to technological improvements and energy policy enforcement, a focus on reliability of meeting power demand along with traditional focus on cost optimization is needed.

To this end, in this article we focus on microgrids sourced by renewables. The renewable energy generation unit (ReGU), possibly consisting of solar and wind, has a variable power output which results in uncertainty in the total power that can be supplied to the loads. Here, we address the problem of meeting the electricity demand of the loads using ReGUs, where batteries are used to mitigate the uncertainty inherent in ReGUs. To capture different scenarios of electricity demand we consider three different situations, (i) power demand scenario where the instantaneous power demand of the loads is to be met over a short-time horizon where optimality is sought with respect to statistical measures, (ii) an energy demand problem where a certain amount of energy demanded has to be provided with guarantees of optimality, (iii) a scenario where the uncertain ReGUs are required to supply the power to the loads at a future time-instant $T_f$ with an almost sure guarantee, using batteries allocated optimally.    
For each of these three problems we provide solutions for ReGUs that minimize the risk of not being able to meet the electricity demand of the loads due to their power output variability. Numerical simulations to illustrate the applicability of our schemes corroborate the analytical/algorithmic claims.

\noindent The major contribution of this paper is threefold:

\noindent (i) We propose a stochastic optimization model to meet short-time power demand with minimum variation in renewable generation, addressing inherent uncertainties of various renewable energy sources.

\noindent(ii) In contrast to many existing optimization techniques which primarily focus on investment cost optimization of hybrid renewable energy systems to decide installed capacity before commencement of the renewable project, we provide a solution to the problem of meeting power demand in real-time with \textit{almost sure} guarantee given the stochastic nature of the renewable energy sources. Such a guarantee is essential for applications which are critical. We also provide a policy of how to optimally utilize the renewable generation and the battery storage such that the demand is met without overproduction or underproduction as well. We remark that such a solution is pertinent for supporting critical infrastructure and to the best of the authors knowledge is missing from existing state of the art.

\noindent(iii) We present a strategy to find minimum required battery reserve to meet the constant power demand throughout a time interval which minimizes the expected energy mismatch between combined generation from stochastic renewable sources and battery and demand throughout the time interval. Unlike the contributions (i) and (ii), (iii) provides meeting an energy demand instead of a power demand.

The rest of the paper is organized as follows: We provide the problem formulation for the three electricity demand scenarios in Section~\ref{sec:probform}. Then, we present the proposed schemes in Section~\ref{sec:schemes} along with their analysis and discussion on implementation. We also give characterization on how the proposed schemes are able to solve the corresponding risk minimization problem associated with each scenario. In Section~\ref{sec:sim}, we present the results of the numerical experiments pertaining to these scenarios and provide a discussion on suitability of the proposed schemes. Section~\ref{sec:conclusion} provides the concluding remarks.

\section{Problem Formulation}\label{sec:probform}
\begin{figure}[h!]
    \centering
    \includegraphics[scale=0.5]{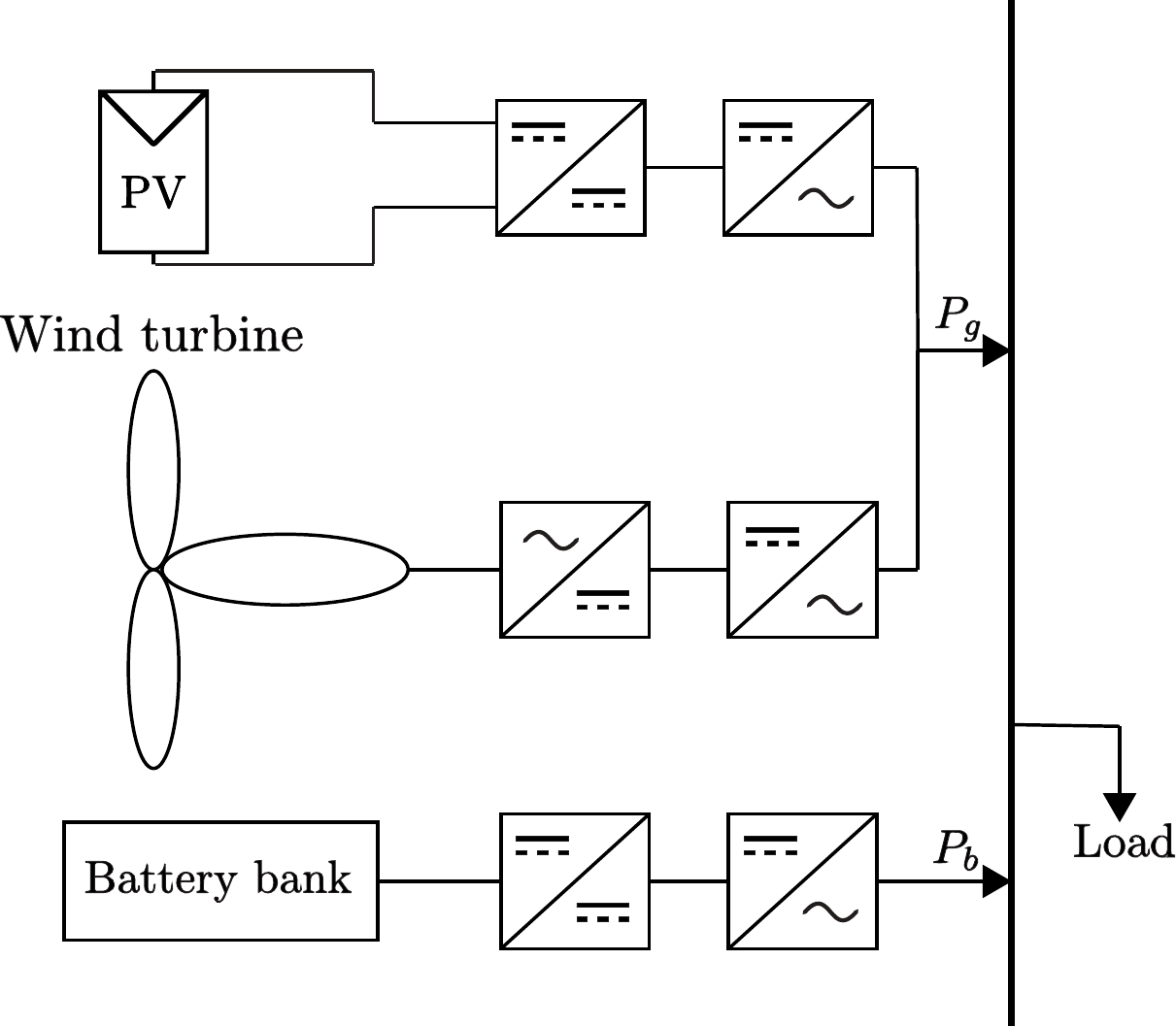}
    \caption{Schematic of PV/WT/Battery-based hybrid system}
    \label{fig:hres_sch}
\end{figure}
Schematic diagram of a typical HRES (Wind-PV-Battery) is shown in Fig.~\ref{fig:hres_sch}. Given the uncertainties in renewable power generation, we introduce the problem of meeting the load demand under the three scenarios.
%%%%%%%%%%%%%%%%%%%%%%%%%%%%%
% Commenting the following to save space
%%%%%%%%%%%%%%%%%%%%%%%%%%%%%
We begin with a short-term power demand problem. 
\begin{subsection}{Short-term Power Demand Problem}
Here we consider the problem of a microgrid which has access to $n$ ReGU's with different renewable assets. The microgrid has no recourse to batteries. We denote the power generated by the $i^{th}$ ReGU as $e_i$ which is modeled as a normal random variable with mean $\mu_i$ and variance $\sigma_i^2$, that is, $e_i \sim \mathcal{N}(\mu_i,\,\sigma_i^{2})\,$. We further let $\mathbf{e} = [e_1 \ e_2\ \dots\ e_n]^T$, $\mu = [\mu_1\ \mu_2\ \dots\ \mu_n]^T $, with $\mu_i = \mathop{\mathbb{E}(e_i)}$, and $\mathbf{Cov}(e) = \mathbf{R}_e$, where $\mathbf{Cov}(x)$ denotes the covariance of a random vector $x$. The objective is to determine an optimal combination of ReGUs, that minimizes the variability in the generation and ensures the availability of $\Dl$ units of power to the loads. The total amount of power generated, $P_g$, by the ReGUs in the microgrid is:
\begin{align}\label{eq:Pg}
    P_g =  \textstyle \sum_{i=1}^n \alpha_i e_i,
\end{align} 
where, each ReGU $i$ provides $\alpha_i e_i$ units of power, where $\alpha_i \geq 0$. Notice, that $P_g$ is also a normal random variable with mean $
\overline{\mu} = \mu^T \alpha = \sum_{i=1}^n \alpha_i \mu_i$, and variance $\overline{\sigma}  = \boldsymbol{\alpha}^T \mathbf{R}_e \boldsymbol{\alpha}$.
The following optimization problem encapsulates the objective:
\begin{align}\label{eq:optineq}
    \minimize_{\boldsymbol{\alpha}} \ \ & \textstyle\frac{1}{2} \boldsymbol{\alpha}^T \mathbf{R}_e \boldsymbol{\alpha}  \\
   \text{subject to} \ \  & \mu^T\boldsymbol{\alpha} \geq \Dl \nonumber \\
   & 0 \leq \alpha_i \leq 1, \ \text{for all} \ i \in \{1,2, \dots, n\}\nonumber.
\end{align}

The constraints on the weights, $\alpha_i$ can be changed slightly to obtain a modified problem in the following form:
\begin{align}\label{eq:opteq}
    \minimize \ \ & \textstyle\frac{1}{2} \boldsymbol{\alpha}^T \mathbf{R}_e \boldsymbol{\alpha} \nonumber \\
    \text{subject to} \ \ & \mu^T\boldsymbol{\alpha} \geq \Dl \\
    & \mathds{1}^T \boldsymbol{\alpha} = 1 \nonumber \\
   & \boldsymbol{\alpha} \succeq 0, \nonumber
\end{align}
here, $\mathds{1}$ denotes a column vector with all entries equal to 1. In the Subsection~\ref{sec:KKTsoln}, we provide a strategy to solve~(\ref{eq:opteq}) and obtain the optimal weights $\alpha^*_i, \ i =1, 2, \dots,n$. Next, we present the energy demand problem.
\end{subsection}

\begin{subsection}{Energy Demand Problem}
Consider, a scenario where the load demand to be met in the microgrid is $\De$ units of power throughout up to a time horizon $t = T$, starting at $t = 0$. 

Here the microgrid has access to a single ReGU with output power $P_g(t)$ which is stochastic and follows a geometric brownian motion (GBM) described by:
\begin{align}\label{eq:dpower_gbm}
   \text{d}P_g(t) = \mu_g P_g(t) \text{d}t + \sigma_g P_g(t) \text{d}W_t,
\end{align}
where the constants, $\mu_g$ and $\sigma_g$, are the percentage drift and percentage volatility terms respectively, and $\text{d}W_t$ denotes a Wiener Process. Here we assume that $P_g(t) > 0,\ \text{for all}\ t \geq 0$. The microgrid has to determine $K_{batt}$ number of battery units, each capable of generating $P_b$ units of power, in an optimal manner such that, in combination with ReGU's power generation, a load of $\De$ units is sustained over $T$ units of time. Thus an energy demand of $\De T$ units has to be met.

The power mismatch, $\nu(t)$ at any $t \in [0, T]$ is:
\begin{align}\label{eq:powerdeficit}
    \nu(t) = P_g(t) + K_{batt}P_b - \De.
\end{align}
The main objective is to determine $K_{batt}$ that solves:
\begin{align}\label{eq:battblockopt}
        \minimize_{K_{batt}} \textstyle\int_0^{T} &\mathbb{E}\big[(P_g(t) + K_{batt} P_b - \De)^2\big] \text{d}t \nonumber \\
        &\text{subject to} \ K_{batt} \geq 0,\ P_g(t) \mbox{ satisfies } (\ref{eq:dpower_gbm}).
\end{align}
As will be seen later the problem~(\ref{eq:battblockopt}) becomes trivial or ill-posed if the horizon $T$ is too large; note that $\mu_g$ and $\sigma_g$, in the GBM model of $P_g(t)$, need to reflect adequate time-scales in relation to the time horizon $T$ being considered. 

In Section~\ref{sec:energysoln} we make precise the concerns raised here and present a solution to provide insights on the choice of the horizon $T$ to be chosen. For a larger horizon, the solution is pertinent only if the larger horizon is subdivided into smaller intervals based on time scale associated with $\mu_g$ and $\sigma_g$.
\end{subsection}

\begin{subsection}{Future-time Power Demand Problem}\label{sec:futurepower}
Here we consider the scenario where a microgrid consists of both critical and non-critical loads. The critical infrastructure has access to a renewable generation unit that produces power $P_g(t)$ given by~(\ref{eq:dpower_gbm}) and it postulates a need of $\D$ units of power at a future time $T_f$.
In case $P_g(T_f) \geq \D$, the microgrid can ensure the sustained operation of the critical loads. However, as the generation is uncertain the microgrid must take measures for the case when $P_g(T_f) < \D$ to avoid risk of not being able to sustain the critical loads. To meet this objective the microgrid enters in a contract with the Renewable Generation Farm (ReGF) that contains a pool of generation units (batteries and renewable generation units) which enables the microgrid to get $\D$ units of power at a future time $T_f$ if required. The ReGF maintains a portfolio of renewable generation units and battery blocks which will be utilized in case the microgrid is not able to meet the critical demand $\D$. For the ReGF, let the number of renewable generation units at any time $t$ be denoted as $a(t)$ and the number of battery blocks as $b(t)$. We will focus on the time evolution of this portfolio (evolution of $a(t)$ and $b(t)$) of power sources with the time starting from the point of entering into the contract designated as $t = 0$ till the time $t = T_f$ at which the critical power is needed. At time $t=T_f$ there are two possibilities:
\begin{enumerate}
    \item $P_g(T_f) \geq \D$ : This means that microgrid has enough power to supply the critical load demand $\D$. In this case, the microgrid has surplus power of $P_g(T_f) - \D$ units which can be used to supply power to other non-critical loads and there is no need to get additional power from the ReGF.
    \item $P_g(T_f) < \D$ : In this case the microgrid is not able to meet the critical load demand by itself and the ReGF will provide $\D$ units of power to the microgrid to support at least critical demand.
\end{enumerate}
The problem from the perspective of the ReGF is given as:
\begin{itemize}
\item Determine initial number, $b(0),$  of battery blocks and number, $a(0)$, of renewable generation units, and

\item Determine the number, $b(t)$ of battery blocks and number $a(t)$ of renewable generation units based on $P_g(t)$,  such that $\text{d}a(t)P_g+\text{d}b(t)P_b=0$ for $t\in [0,T_f],$  to ensure {\it almost surely} that
\begin{enumerate}
\item[(i)] $a(T_f)P_g(T_f)+b(T_f) P_b = \D$ \ if $P_g(T_f) < \D$,
\item[(ii)] $a(T_f)P_g(T_f)+b(T_f) P_b = 0$ \ \ \ if $P_g(T_f) \geq \D$.
\end{enumerate}
\item Determine the non-critical load demand that can be served while ensuring the critical demand of $\D$ units at $T_f$ is met \textit{almost surely}.
\end{itemize}

\noindent Here $\text{d}a(t)P_g(t)+\text{d}b(t)P_b=0$ for $t\in [0,T_f]$ ensures that the power change due to changes in the number of battery units and the renewable generation units is zero; thus the power change is only due to the change in the renewable generation power. Thus, after the initial allocation $a(0)$ and $b(0),$ the ReGF, at a future time $t$, can change the number of battery units but has to ensure that the power change is compensated by exchanging with the renewable power generation units.

Note that finding the amount of initial battery blocks and the initial renewable generation units is essential for the ReGF as having a lesser number of batteries and generation units has a risk of not being able to provide $\D$ units of power at time $T_f$, whereas provisioning more may result in excess energy produced at $t=T_f$ that will lead to a loss of revenue to the ReGF as only $\D$ units of power is required by the microgrid. Subsection~\ref{sec:BSsoln}, presents the proposed strategy in this article to find the solution for the future-time power demand problem. 

\end{subsection}

\begin{section}{Solution Methodologies}\label{sec:schemes} 
We will treat each of the scenarios outlined in Section~\ref{sec:probform} individually in the coming subsections. We start with the short-term power demand problem.
\subsection{A Scheme for Short-term power Demand Problem}\label{sec:KKTsoln}
Here, we present a scheme to solve~(\ref{eq:opteq}). Without loss of generality we assume that the optimal allocation $\alpha^*$ exists. The Lagrangian associated with~(\ref{eq:opteq}) is: 
\begin{align}
   L = \textstyle\frac{1}{2}\boldsymbol{\alpha}^T\mathbf{R}_e\boldsymbol{\alpha} - \lambda^T\boldsymbol{\alpha} + \delta [\mathds{1}^T\boldsymbol{\alpha} - 1] + \gamma [\Dl - \mu^T \boldsymbol{\alpha}],
\end{align}
where, $\lambda \succeq 0$, $\gamma \geq 0$ and $\delta \in \mathbb{R}$ are the Lagrange multipliers. Writing the KKT conditions for the above Lagrangian:
\begin{align}
    &\mathbf{R}_e \boldsymbol{\alpha} + \delta \mathds{1} - \gamma \mu = \lambda, \  \lambda \otimes \boldsymbol{\alpha} = 0 ,\gamma [\Dl - \mu^T \boldsymbol{\alpha}] = 0, \label{eq:kkt1}\\
    & \hspace{0.2in}\boldsymbol{\alpha} \succeq 0, \ \lambda  \succeq 0, \gamma \geq 0, \ \mathds{1}^T\boldsymbol{\alpha} = 1, \ \Dl \leq \mu^T \boldsymbol{\alpha}, \label{eq:kkt5}
\end{align}
where, $\lambda \otimes \boldsymbol{\alpha}$ denote the Hadamard (entry-wise) product of vectors $\lambda$ and $\boldsymbol{\alpha}$. Note, that finding a closed form solution is not possible in general. However, a solution to KKT system of equations~(\ref{eq:kkt1}), (\ref{eq:kkt5}) can be found using commercial solvers including CPLEX \cite{cplex}, GUROBI \cite{gurobi} and MOSEK \cite{mosek}. In a special case however it is possible to solve the above KKT system of equations and get a closed form solution. To this end, we make the following assumption on~(\ref{eq:opteq}):
\begin{assump}\label{ass:uncorr}
The random variable $e_i\sim\mathcal{N}(\mu_i, \sigma_i)$, is uncorrelated with the random variable $e_j\sim\mathcal{N}(\mu_j, \sigma_j)$ for all $i,j \in \{1, 2,\dots, n\}$. 
\end{assump}
\noindent Note that Assumption~\ref{ass:uncorr} is valid when the renewable energy sources are subjected to uncorrelated external conditions. This can happen when the renewable energy sources are placed at different geographical locations which are subjected to different short-term weather conditions. 

Under Assumption~\ref{ass:uncorr}, the covariance matrix $\mathbf{R}_e$ is a diagonal matrix. The objective function $\frac{1}{2}\alpha^T \mathbf{R}_e \alpha$ can be expressed in the components of $\mathbf{R}_e$ as: $\frac{1}{2}\sum_{i=1}^n\sigma_i\alpha_i^2$, where, $\sigma_i$ and $\alpha_i$ is the variance of the random variable $e_i$ and the share of ReGU $i$ in the generated power $P_g$ respectively. Equations~(\ref{eq:kkt1})-(\ref{eq:kkt5}) can be written as:
\begin{align}
    \sigma_i \alpha_i - \lambda_i + \delta - \gamma \mu_i &= 0, \ \lambda_i \alpha_i = 0, \ i = 1, 2, \dots, n, \label{eq:stationarity}\\
    & \hspace{-0.3in} \gamma \left[\Dl - \textstyle \sum_{i=1}^n \alpha_i \mu_i \right] = 0, \label{eq:comp2}\\
    \alpha_i \geq 0, \ \lambda_i & \geq 0,  \ i = 1, 2, \dots, n, \label{eq:feas1}\\
    \textstyle \sum_{i=1}^n \alpha_i = 1, \ \Dl & \leq \textstyle \sum_{i=1}^n \alpha_i \mu_i, \ \gamma \geq 0. \label{eq:feas2}
\end{align}
There are two possible cases. In the first case corresponding to an interior solution, the optimal power production is higher than $D_{\ell}$ and the dual optimal $\gamma = 0$. In the second case, the inequality constraint in~(\ref{eq:feas2}) is critical, and $\gamma > 0$. To find a closed form solution of equations~(\ref{eq:stationarity})-(\ref{eq:feas2}) we consider the two cases: 
\begin{subsubsection}{Case 1: Excess Production}
Assume, under the optimal solution, $\sum_{i=1}^n \alpha^*_i \mu_i > \Dl$. Here~(\ref{eq:comp2}) implies that $\gamma^* = 0$ and the KKT conditions reduce to:
\begin{align*}
    \sigma_i \alpha^*_i - \lambda^*_i + \delta^* = 0, \ \lambda^*_i\alpha^*_i = 0, \ i = 1,2,\dots, n, &\\
    \textstyle \sum_{i=1}^n \alpha^*_i = 1, \ \alpha^*_i \geq 0, \ \lambda^*_i \geq 0, \ i = 1,2,\dots, n.&
\end{align*}
As, $\lambda^* = [\lambda^*_1 \dots \lambda^*_n]$ acts as a slack variable it can be eliminated leaving,
\begin{align*}
    \textstyle \sum_{i=1}^n \alpha^*_i = 1, \ \alpha^*_i \geq 0, \ (\sigma_i \alpha^*_i + \delta^*)\alpha^*_i = 0, \ i = 1,\dots, n, &\\
    \sigma_i \alpha^*_i + \delta^* \geq 0, \ i = 1,\dots, n.&
\end{align*}
If $\delta^* < 0$, the last condition can only hold if $\alpha^*_i > 0$, which implies $\alpha^*_i\sigma_i + \delta^* =0$. Solving for $\alpha^*_i$ we conclude $\alpha^*_i = -\delta^* / \sigma_i$ if $\delta < 0$. If $\delta \geq 0$, it is impossible to have $\alpha^*_i > 0$ as it will violate the complementary slackness condition. Therefore, $\alpha^*_i = 0$, if $\delta^* \geq 0$. Thus for all $i = 1,\dots,n$, 
\begin{align}\label{eq:op_sol}
\alpha^*_i = 
    \begin{cases}
        - \delta^* / \sigma_i, & \ \text{if} \ \delta^* < 0 \\
        0, & \ \text{if} \ \delta^* \geq 0,
    \end{cases}
\end{align}
Note, that since $\sum_{i=1}^n \alpha^*_i = 1$ we cannot have $\delta^* \geq 0$.  Therefore, substituting~(\ref{eq:op_sol}) in the primal feasibility condition, $\sum_{i=1}^n \alpha^*_i = 1$, we get, $\delta^* = -1/\sum_{i=1}^n \frac{1}{\sigma_i}$.

Therefore, the optimal $\alpha^*_i$ is given as $\alpha^*_i = (1/\sigma_i)(1/\sum_{i=1}^n \frac{1}{\sigma_i})$. We call this an \textit{Excess Production (EP)} solution. Let $\boldsymbol{\alpha^*} = [\alpha_1^* \dots \alpha_n^*]$. The solution $\boldsymbol{\alpha^*}$ should be a feasible solution satisfying $\sum_{i=1}^n \alpha_i^* \mu_i > \Dl$. If it holds, then $\boldsymbol{\alpha^*}$ is the solution of~(\ref{eq:opteq}) and no further work is required. If this is not the case then we know that the constraint is $\sum_{i=1}^n \alpha^*_i \mu_i = \Dl$ at the optimal solution and we have the following case.  
\end{subsubsection}
\begin{subsubsection}{Case 2: Critical Production} Assume, for the optimal solution, $\sum_{i=1}^n \alpha^*_i \mu_i = \Dl$. Here, $\gamma$ in the complementarity condition is not $0$. The modified KKT conditions are: 
\begin{align*}
    \sigma_i \alpha^*_i - \lambda^*_i + \delta^* - \gamma^* \mu_i= 0, \ \lambda_i\alpha^*_i = 0, \ i = 1,2,\dots, n, &\\
    \textstyle\sum_{i=1}^n \alpha^*_i = 1, \ \alpha^*_i \geq 0, \ \lambda^*_i \geq 0, \ i = 1,2,\dots, n.&
\end{align*}
Eliminating the slack variable $\lambda^*$ as earlier we get,
\begin{align*}
    \textstyle\sum\limits_{i=1}^n \alpha^*_i = 1, \ (\sigma_i \alpha^*_i + \delta^* - \gamma^* \mu_i)\alpha_i = 0, i = 1,\dots, n, &\\
    \sigma_i \alpha^*_i + \delta^* - \gamma^* \mu_i \geq 0, \ \alpha^*_i \geq 0, i = 1,\dots, n. &
\end{align*}
If $\delta^* < \gamma^* \mu_i$ the last equation implies $\alpha^*_i > 0$, which gives $\alpha^*_i = \frac{1}{\sigma_i}[\mu_i \ -1][\gamma^* \ \  \delta^*]^T$. If $\delta^* \geq \gamma^* \mu_i$, then we get $\alpha^*_i = 0$ by the complementary slackness condition. Thus we have,
\begin{align}\label{eq:pb_sol}
\alpha_i = 
    \begin{cases}
        \frac{1}{\sigma_i}[\mu_i \  -1][\gamma^*\ \  \delta^*]^T, & \ \text{if} \ \delta^* < \gamma^* \mu_i \\
        0, & \ \text{if} \ \delta^* \geq \gamma^* \mu_i,
    \end{cases}
\end{align}
or, $\alpha^*_i = \max\{0,\frac{1}{\sigma_i}[\mu_i \ -1][\gamma^* \ \  \delta^*]^T\}$. From the primal feasibility condition, $\sum_{i=1}^n \alpha^*_i = 1$, gives
\begin{align}\label{eq:pb}
   \textstyle\sum_{i=1}^n\max \left\{0,\textstyle\frac{1}{\sigma_i}[\mu_i \  -1][\gamma^* \ \  \delta^*]^T\right \} = 1.
\end{align}
Solving the univariate optimization problem in $[\gamma^* \ \  \delta^*]^T$ gives the solution to the original problem. The solution of~(\ref{eq:pb}) can be found using a water filling algorithm \cite{he2013water}.  We term this solution a \textit{Critical Production (CP)} solution.  We present the procedure to solve~(\ref{eq:opteq}) in Algorithm~\ref{alg:kktsoln}.
\begin{algorithm}[h]
    \SetKwBlock{Input}{Input:}{}
    \SetKwBlock{Initialize}{Initialize:}{}
    \SetKwBlock{STEPONE}{STEP 1:}{}
    \SetKwBlock{STEPTWO}{STEP 2:}{}
    \SetKwBlock{Repeat}{Repeat for $ k = 1,2, \dots$}{}
    \newcommand{\inparallel}{\textbf{In parallel}}
    \Input{ Mean generation vector $ \mu = [\mu_1\ \dots\ \mu_n ]$; \\
    Variance vector of the ReGUs $\sigma = [\sigma_1\ \dots\ \sigma_n]$;}\vspace{0.1in}
    Compute \textit{EP} solution $\boldsymbol{\alpha}^*$; \\\vspace{0.05in}
      \uIf {$\sum_{i=1}^n \alpha_i^* \mu_i > \Dl$} {\vspace{0.05in} EP is the solution to~(\ref{eq:opteq}); 
                }
     \uElse { 
             $\boldsymbol{\alpha}^*$ is the \textit{CP} solution~(\ref{eq:pb_sol}), obtained by \hspace*{-0.1in} solving~(\ref{eq:pb})
            }
    
      \caption{Solution Procedure for solving~(\ref{eq:opteq})}
        \label{alg:kktsoln}
\end{algorithm}
\end{subsubsection}

\begin{subsection}{Proposed Battery Reserve Design}\label{sec:energysoln}
In this subsection, we solve for the battery reserve as stated in problem~(\ref{eq:battblockopt}). 
\noindent From~(\ref{eq:dpower_gbm}) it follows that the expected value and variance of $P_g(t)$ are given by the following expressions:
\begin{align}
        \mathbb{E}[P_g(t)] &= P_g(0)e^{\mu_g t},\\
        \mathbf{Var}[P_g(t)] &= P_g^2(0)e^{2\mu_g t}(e^{\sigma_g^2t}-1).
\end{align}
Note that the variance of $P_g(t)$ grows exponentially from zero. Thus the optimal solution, $K_{batt}$ in~(\ref{eq:battblockopt}) will still incur a large mismatch from the desired power if the horizon $T$ is very large where $\sigma_g$ does not reflect the volatility associated with the time scale of $T$.

To this end, we propose a scheme in which we divide the time interval $[0, T]$ into sub-intervals $[0,t_1], [t_1,t_2],\dots, [t_j,T]$, based on time scale associated with given $\mu_g$ and $\sigma_g$, for some finite natural number $j$ and solve~(\ref{eq:battblockopt}) for these sub-intervals incrementally. Let $\Delta t_i$ to be the length of the $i^{th}$ sub-interval $[t_{i-1},t_i]$, with $t_i = t_{i-1}+ \Delta t_i$. Let $K_i$ be the solution to~(\ref{eq:battblockopt}) with the interval $[0,T]$ replaced by $[t_{i-1},t_i]$. To avoid an overestimate we determine  $\Delta t_i$ such that the expectation of squared power mismatch, $\nu(t)$, over the time-interval $[t_{i-1},t_i]$ is constrained below a certain desired tolerance $\varepsilon > 0$. For a given tolerance bound $\varepsilon$, we calculate the length $\Delta t_i$ of the $i^{th}$ sub-interval by solving the following equation: 
\begin{align}\label{eq:intervallength}
\hspace{-0.09in} \varepsilon &= \textstyle\frac{1}{\Delta t_i}\int_{t_{i-1}}^{t_i} \mathbb{E}\left[(P_g(t) + K_i P_b - \De)^2|P_g(t_{i-1})\right] \text{d}t,
\end{align}
where $K_i$ is substituted with the expression given in Proposition~\ref{prop:optimal_energy}.
Once $\Delta t_i$ is determined, we solve the optimization problem~(\ref{eq:battblockopt}) for the interval $[t_{i-1},t_i]$ to find the numerical value of $K_i$. The whole process is repeated for the next sub-interval $[t_i, t_{i+1}]$ to find $\Delta t_{i+1}$ and $K_{i+1}$ until $\sum\limits_i \Delta t_i \geq T$.
\begin{proposition}\label{prop:optimal_energy}
The number of battery blocks $K_i$, for all $i = 1,2,\dots$, that solve~(\ref{eq:battblockopt}) for the $i^{th}$ sub-interval $[t_{i-1}, t_i]$ is given by: 
\begin{align*}
     K_i = \max \left \{ 0, \textstyle\frac{1}{P_b}\left[\De - \textstyle\frac{P_g(t_{i-1})e^{\mu_g t_{i-1}}}{\mu_g \Delta t_i} \left(e^{\mu_g \Delta t_i} - 1 \right) \right] \right\}.
\end{align*}
\end{proposition}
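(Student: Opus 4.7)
The plan is to reduce the problem to a scalar-valued constrained quadratic optimization in $K$ and then minimize it in closed form, using the conditional mean of a geometric Brownian motion. Concretely, I would start by expanding the integrand pointwise in $t$:
\begin{align*}
\mathbb{E}\bigl[(P_g(t)+KP_b-\De)^2 \mid P_g(t_{i-1})\bigr]
&= K^2 P_b^2 + 2KP_b\,\mathbb{E}[P_g(t)-\De \mid P_g(t_{i-1})] \\
&\quad + \mathbb{E}[(P_g(t)-\De)^2 \mid P_g(t_{i-1})].
\end{align*}
Integrating over $[t_{i-1},t_i]$, the cost functional becomes
$$J(K) = P_b^2 \Delta t_i\, K^2 + 2 P_b K \int_{t_{i-1}}^{t_i}\mathbb{E}[P_g(t)-\De \mid P_g(t_{i-1})]\,dt + C,$$
where $C$ is a term that does not depend on $K$ and may therefore be discarded.

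Next I would compute the integral using the standard conditional mean of the GBM in~(\ref{eq:dpower_gbm}), namely $\mathbb{E}[P_g(t)\mid P_g(t_{i-1})]=P_g(t_{i-1})\,e^{\mu_g (t-t_{i-1})}$, so that
$$\int_{t_{i-1}}^{t_i} \mathbb{E}[P_g(t)\mid P_g(t_{i-1})]\,dt = \frac{P_g(t_{i-1})}{\mu_g}\bigl(e^{\mu_g \Delta t_i}-1\bigr),$$
which matches the expression in the proposition up to how $P_g(t_{i-1})$ is normalized. With this, $J(K)$ is a strictly convex quadratic in $K$ with positive leading coefficient $P_b^2 \Delta t_i$, so it admits a unique unconstrained minimizer obtained from $J'(K)=0$:
$$K^{\mathrm{unc}} = \frac{1}{P_b}\left[\De - \frac{1}{\Delta t_i}\int_{t_{i-1}}^{t_i} \mathbb{E}[P_g(t)\mid P_g(t_{i-1})]\,dt\right].$$

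Finally I would handle the feasibility constraint $K_{batt}\geq 0$. Since $J$ is a univariate strictly convex quadratic, its projection onto $[0,\infty)$ is simply $\max\{0,K^{\mathrm{unc}}\}$: if $K^{\mathrm{unc}}\geq 0$ then it is itself optimal, and if $K^{\mathrm{unc}}<0$ then $J$ is monotone increasing on $[0,\infty)$ so $K_i=0$ is optimal. Substituting the closed-form integral into $K^{\mathrm{unc}}$ then yields the stated formula for $K_i$.

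I do not expect any serious obstacles here; the only subtle points are (a) the interchange of expectation and integration, which is legitimate because $P_g(t)$ is lognormal with finite second moments for every fixed $t$ and the integrand is continuous in $t$ (so Fubini applies), and (b) the correct handling of the inequality constraint via projection on the scalar line, which is the only reason the final expression involves the $\max$ operator.
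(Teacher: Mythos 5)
Your proof is correct and takes essentially the same route as the paper's: both reduce~(\ref{eq:subintervalopt}) to a scalar convex quadratic in $K_i$ (dropping the $K$-independent term) and then enforce $K_i \geq 0$ --- the paper by writing the Lagrangian and solving the KKT system explicitly, you by projecting the unconstrained minimizer onto $[0,\infty)$, which is equivalent for a univariate strictly convex quadratic. One remark worth recording: your conditional mean $\mathbb{E}[P_g(t)\mid P_g(t_{i-1})]=P_g(t_{i-1})e^{\mu_g(t-t_{i-1})}$ integrates to $\frac{P_g(t_{i-1})}{\mu_g}\left(e^{\mu_g\Delta t_i}-1\right)$, i.e.\ \emph{without} the extra factor $e^{\mu_g t_{i-1}}$ that appears in the proposition's stated formula; your normalization is the consistent one for a GBM conditioned on its observed value at $t_{i-1}$, so the discrepancy you flag is a genuine slip in the paper's final substitution rather than in your argument.
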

\begin{proof}
The optimization problem~(\ref{eq:battblockopt}) for $i^{th}$ sub-interval is:
\begin{align}\label{eq:subintervalopt}
        \minimize_{K_i} \textstyle\int_{t_{i-1}}^{t_i} &\mathbb{E}\left[(P_g(t) + K_{i}P_b - \De)^2 | P_g(t_{i-1})\right] \text{d}t \nonumber \\
        &\text{subject to} \ K_{i} \geq 0,\ P_g(t) \mbox{ satisfies } (\ref{eq:dpower_gbm}). 
\end{align}
Writing the Lagrangian of the above problem we have,
\begin{align*}
  L & = K_i^2 P_b^2\Delta t_i + \textstyle\int_{t_{i-1}}^{t_i} \mathbb{E}(P_g^2(t)|P_g(t_{i-1})) \text{d}t + D^2_e\Delta t_i \\
  &- 2K_i P_b \De \Delta t_i - 2\De\textstyle\int_{t_{i-1}}^{t_i} \mathbb{E}(P_g(t)|P_g(t_{i-1})) \text{d}t \\
  &+ 2K_i P_b\textstyle\int_{t_{i-1}}^{t_i} \mathbb{E}(P_g(t)|P_g(t_{i-1})) \text{d}t - \lambda K_i,
\end{align*}
where, $\lambda \geq 0$, is the Lagrange multiplier. We have the following KKT conditions:
\begin{align*}
    2K_iP_b^2\Delta t_i - & 2P_b\De \Delta t_i \\
    &+ 2P_b\textstyle\int_{t_{i-1}}^{t_i} \mathbb{E}(P_g(t)|P_g(t_{i-1}))\text{d}t - \lambda = 0, \\
    & K_i \lambda = 0, \ K_i \geq 0, \ \lambda \geq 0.
\end{align*}
Solving the above KKT system of equations we get: 
\begin{align}
    K_i^* & = \max \left \{ 0, \textstyle\frac{1}{P_b}\left[\De - \textstyle\frac{1}{\Delta t_i} \int_{t_{i-1}}^{t_i} \mathbb{E}(P_g(t)|P_g(t_{i-1}))\text{d}t \right] \right\} \nonumber \\
    & \hspace{-0.3in} = \max \left \{ 0, \textstyle\frac{1}{P_b}\left[\De - \frac{P_g(t_{i-1})e^{\mu_g t_{i-1}}}{\mu_g \Delta t_i} \left(e^{\mu_g \Delta t_i} - 1 \right) \right] \right\}. \label{eq:optKi}
\end{align}
Since,~(\ref{eq:subintervalopt}) is a convex optimization problem, therefore the solution of the KKT system of equations is the optimal solution of~(\ref{eq:subintervalopt}). This completes the proof.
\end{proof}
\noindent We summarize the proposed strategy for maintaining the battery reserve in real-time in Algorithm~\ref{alg:batteryblocks}.
\begin{algorithm}[h]
    \SetKwBlock{Input}{Input:}{}
    \SetKwBlock{Initialize}{Initialize:}{}
    \SetKwBlock{STEPONE}{STEP 1:}{}
    \SetKwBlock{STEPTWO}{STEP 2:}{}
    \SetKwBlock{Repeat}{Repeat for $ i = 1,2, \dots$}{}
    \newcommand{\inparallel}{\textbf{In parallel}}
    \Input{ tolerance parameter $\varepsilon$; \\
    $\mu_g$ and $\sigma_g$;}
    \Repeat{  
       Compute $\Delta t_i$ using~(\ref{eq:intervallength}); \ 
       Find $K_i$ using~(\ref{eq:optKi});\\
       \If {$\sum\limits_{i} \Delta t_i \geq T$}
       {\hspace*{0.05in} \textbf{break};}
       }
      \vspace{0.025in}
        \caption{Battery Reserve Computation for the \hspace*{0.815in} Energy Demand Problem}
        \label{alg:batteryblocks}
\end{algorithm}
\end{subsection}

\subsection{Future-time Power Demand Problem}\label{sec:BSsoln}
Here, we present a scheme to solve the problem of meeting the power demand of $\D$ units at a future time instant, $T_f$ introduced in the Subsection~\ref{sec:futurepower}. We provide a policy of maintaining the number of battery blocks, $b(t)$ and the number of generation units, $a(t)$ to almost-surely meet the critical demand $\D$ units at time $T_f$. The power available in the portfolio maintained by the ReGF, depends on $P_g(t)$, satisfying~(\ref{eq:dpower_gbm}) and time $t \leq T_f$, and is given by,
\begin{align}\label{eq:portfolio}
    V(P_g(t), t) = a(t)P_g(t) + b(t)P_b,
\end{align}
where, $a(t)$ is the number of generation units and $b(t)$ is the number of battery blocks in the portfolio at time $t \leq T_f$. In the subsequent development the explicit dependency of the variables on time $t$ is omitted for brevity of notations.
\begin{lemma}
Under, the constraint, $\text{d}a(t)P_g(t) + \text{d}b(t) P_b = 0$ for all $t \in [0,T_f]$ and if $a = \frac{\partial V}{\partial P_g}$, we have 
\begin{align*}
   \textstyle\frac{\partial V}{\partial t} + \frac{1}{2} \sigma_g^2 P_g^2 \textstyle\frac{\partial^2 V}{\partial P_g^2} = 0.
\end{align*}
\end{lemma}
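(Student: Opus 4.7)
The plan is to mirror the classical Black--Scholes delta-hedging derivation, specialized to the zero-interest-rate setting (since battery blocks are a ``riskless'' asset that does not appreciate over time). There are two different ways to write down $\mathrm{d}V$, one by viewing $V$ as a smooth function of the It\^o process $P_g(t)$, and one by viewing $V$ as a linear combination $aP_g + bP_b$ of a stochastic and a deterministic asset. Matching them will pin down the PDE.

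First, I would apply It\^o's lemma to $V(P_g(t),t)$ using the GBM dynamics $\mathrm{d}P_g = \mu_g P_g\,\mathrm{d}t + \sigma_g P_g\,\mathrm{d}W_t$ together with the quadratic variation $(\mathrm{d}P_g)^2 = \sigma_g^2 P_g^2\,\mathrm{d}t$. This yields
\begin{align*}
\mathrm{d}V = \left(\frac{\partial V}{\partial t} + \tfrac{1}{2}\sigma_g^2 P_g^2 \frac{\partial^2 V}{\partial P_g^2}\right)\mathrm{d}t + \frac{\partial V}{\partial P_g}\,\mathrm{d}P_g.
\end{align*}

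Next, I would differentiate the portfolio identity $V = aP_g + bP_b$ directly. Since $P_b$ is constant, this gives $\mathrm{d}V = a\,\mathrm{d}P_g + P_g\,\mathrm{d}a + P_b\,\mathrm{d}b$. The self-financing type constraint $\mathrm{d}a(t)P_g(t) + \mathrm{d}b(t) P_b = 0$ removes the last two terms, leaving simply $\mathrm{d}V = a\,\mathrm{d}P_g$. Equating the two expressions for $\mathrm{d}V$ and invoking the hedging choice $a = \partial V/\partial P_g$ cancels the stochastic $\mathrm{d}P_g$ term, so that matching the $\mathrm{d}t$ coefficients leaves exactly $\partial V/\partial t + \tfrac{1}{2}\sigma_g^2 P_g^2\,\partial^2 V/\partial P_g^2 = 0$.

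The main conceptual obstacle is justifying the product-rule step: writing $\mathrm{d}(aP_g) = a\,\mathrm{d}P_g + P_g\,\mathrm{d}a$ without a cross-variation term $\mathrm{d}a \cdot \mathrm{d}P_g$ implicitly requires $a(t)$ to be of finite variation (or, more generally, that its quadratic covariation with $P_g$ vanishes), which is the standard tacit assumption in this kind of replication argument. I would either state this explicitly as a regularity assumption on the trading strategy $(a(t),b(t))$, or note that it follows from the self-financing constraint together with $a = \partial V/\partial P_g$ being a smooth function of $(P_g,t)$ whose martingale part can be combined on one side. Once this is acknowledged, the rest of the derivation is purely algebraic.
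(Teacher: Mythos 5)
Your proposal is correct and follows essentially the same route as the paper: apply It\^o's lemma to $V(P_g,t)$, differentiate the portfolio identity $V=aP_g+bP_b$ under the self-financing constraint to get $\mathrm{d}V=a\,\mathrm{d}P_g$, equate the two and use $a=\partial V/\partial P_g$ to kill the $\mathrm{d}W_t$ term, leaving the stated PDE from the $\mathrm{d}t$ coefficients. Your explicit acknowledgment of the missing cross-variation term $\mathrm{d}a\cdot\mathrm{d}P_g$ in the product rule is a point the paper silently elides, but it is a refinement of the same argument rather than a different approach.
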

\begin{proof}
The variation of the power generated by the each generation unit, $P_g(t)$, is governed by a GBM:
\begin{align}\label{eq:ins_prob_gbm}
    \text{d}P_g = \mu_g P_g \text{d}t + \sigma_g P_g \text{d}W_t,
\end{align}
where, $\mu_g$ and $\sigma_g$ are constants and $\text{d}W_t$ is a Wiener Process. Further, since $P_b$ is constant therefore,
\begin{align}\label{eq:dpbatt}
    dP_b = 0.
\end{align}
Now, applying the differentiation operator to the portfolio $V$:% we have
\begin{align}\label{eq:diffV}
    \text{d}V & = a\text{d}P_g + \text{d}aP_g + b\text{d}P_b +  \text{d}bP_b.
\end{align}
Under the constraint $\text{d}a(t)P_g + \text{d}b P_b = 0$ for all $t \in [0,T_f]$,~(\ref{eq:diffV}) becomes (omitting the time dependency of the variables to make the equations legible)
\begin{align}\label{eq:dV1}
     \text{d}V &= a \text{d}P_g + b \text{d}P_b \nonumber\\
        &= a(\mu_g P_g \text{d}t + \sigma_g P_g \text{d}W_t),
\end{align}
where, the last step follows from~(\ref{eq:ins_prob_gbm}) and~(\ref{eq:dpbatt}). Applying Ito's lemma \cite{gardiner2009stochastic} and ignoring \textit{h.o.t}, 
\begin{align}
        \text{d}V & = \textstyle\frac{\partial V}{\partial t}\text{d}t  + \textstyle\frac{\partial V}{\partial P_g}\text{d}P_g + \textstyle\frac{1}{2}\frac{\partial^2 V}{\partial P_g^2}(\text{d}P_g^2) \nonumber\\
        & \hspace{-0.3in} = \textstyle\frac{\partial V}{\partial t}\text{d}t + \textstyle\frac{\partial V}{\partial P_g}(\mu_g P_g \text{d}t + \sigma_g P_g \text{d}W_t) + \textstyle\frac{\sigma_g^2 P_g^2}{2} \textstyle\frac{\partial^2 V}{\partial P_g^2}\text{d}t. \label{eq:dV2} 
\end{align}
Using,~(\ref{eq:dV1}) and~(\ref{eq:dV2}) we get,
\begin{align}
    \left[ \textstyle\frac{\partial V}{\partial t} + \mu_g P_g\textstyle\frac{\partial V}{\partial P_g} + \textstyle\frac{1}{2} \sigma_g^2 P_g^2 \textstyle\frac{\partial^2 V}{\partial P_g^2} - a\mu_g P_g \right]&\text{d}t \nonumber \\ + \left[ \sigma_g P_g\textstyle\frac{\partial V}{\partial P_g} - a\sigma_g  P_g\right]\text{d}&W_t = 0.
\end{align}
To eliminate randomness \cite{black1973pricing}, we make,\\ $[\sigma_g P_g\frac{\partial V}{\partial P_g} - a\sigma_g P_g] = 0$, that is, $a = \frac{\partial V}{\partial P_g}$. Therefore, 
\begin{equation}\label{eq:pde}
    \textstyle\frac{\partial V}{\partial t} + \textstyle\frac{1}{2} \sigma_g^2 P_g^2 \textstyle\frac{\partial^2 V}{\partial P_g^2} = 0.
\end{equation}
\end{proof}
\noindent Next we provide a solution to~(\ref{eq:pde}) under the following terminal condition:
\begin{align}\label{eq:terminal_cond}
   \hspace{-0.1in} V(P_g(T_f),T_f) = 
    \begin{cases}
    0                        & \text{if $P_g(T_f) \geq \D$}\\
    \D - P_g(T_f)           & \text{if $P_g(T_f) < \D$}.
    \end{cases}
\end{align}

\begin{theorem}\label{thm:bs_sol}
Under, the constraint, $\text{d}a(t)P_g(t) + \text{d}b(t) P_b = 0$ for all $t \in [0,T_f]$ and if $a = \frac{\partial V}{\partial P_g}$, then the provisioning policy of the generation and the battery blocks, $a(t)$ and $b(t)$ respectively, such that the terminal condition~(\ref{eq:terminal_cond}) for the portfolio $V(P_g(t),t)$ is met almost-surely is given by:
\begin{align*}
    a(t) &= -F\bigg[\textstyle\frac{\ln\left(\frac{\D}{P_g(t)}\right)-\textstyle\frac{\sigma_g^2}{2}(T_f - t)}{\sigma_g \sqrt{(T_f - t)}}\bigg],\\ 
    b(t) &= \textstyle\frac{\D}{P_b}F\bigg[ \frac{\ln\left(\frac{\D}{P_g(t)}\right)+\textstyle\frac{\sigma_g^2}{2}(T_f - t)}{\sigma_g \sqrt{(T_f - t)}}\bigg],\ t \in [0,T_f)
\end{align*}
where, $F(.)$ is the Cumulative Distribution Function of the standard Gaussian random variable $\sim \mathcal{N}(0,1)$. Further, the amount of non-critical loads that can be served is given by $(1+|a(t)|)P_g(t)$ while ensuring the critical demand of $\D$ units is met almost surely with the terminal condition for the portfolio given in~(\ref{eq:terminal_cond})
\end{theorem}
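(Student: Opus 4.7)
The plan is to recognize that the PDE established in the preceding Lemma together with the terminal condition \eqref{eq:terminal_cond} is exactly a Black--Scholes-type boundary value problem: with $P_g$ playing the role of the underlying asset price and $\D$ playing the role of the strike, the payoff $\max(\D-P_g(T_f),0)$ is the payoff of a European put, and the PDE is the Black--Scholes PDE in the special case of zero risk-free rate (indeed, no first-order $\partial V/\partial P_g$ term appears). I would then invoke the Feynman--Kac representation to write
\[
V(P_g,t) = \mathbb{E}\!\left[\max(\D - \widetilde P_g(T_f),0)\,\big|\,\widetilde P_g(t)=P_g\right],
\]
where $\widetilde P_g$ solves the driftless GBM $\text{d}\widetilde P_g = \sigma_g \widetilde P_g\, \text{d}W_t$. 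Since $\widetilde P_g(T_f)/P_g$ is then log-normal with log-mean $-\tfrac{1}{2}\sigma_g^2(T_f-t)$ and log-variance $\sigma_g^2(T_f-t)$, the conditional expectation reduces to two Gaussian integrals that are routine to evaluate.

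Carrying this out, one splits the expectation into $\D\cdot\Pr[\widetilde P_g(T_f)\le \D\mid P_g]$ and $\mathbb{E}[\widetilde P_g(T_f)\mathbf{1}_{\{\widetilde P_g(T_f)\le \D\}}\mid P_g]$; both collapse into expressions in $F(-d_1)$ and $F(-d_2)$, with
\[
d_1 \;=\; \frac{\ln(P_g/\D)+\tfrac{1}{2}\sigma_g^2(T_f-t)}{\sigma_g\sqrt{T_f-t}},\qquad d_2 \;=\; d_1-\sigma_g\sqrt{T_f-t},
\]
yielding the closed form $V(P_g,t)=\D\,F(-d_2)-P_g\,F(-d_1)$. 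Applying the relation $a=\partial V/\partial P_g$ from the Lemma then produces $a(t)=-F(-d_1)$, since the two terms generated by differentiating $F(-d_1)$ and $F(-d_2)$ through the chain rule cancel by the standard Black--Scholes identity $P_g\exp(-d_1^2/2)=\D\exp(-d_2^2/2)$. Substituting into \eqref{eq:portfolio} gives $b(t)=(V-aP_g)/P_b=(\D/P_b)F(-d_2)$, which matches the statement once one rewrites $-d_1$ and $-d_2$ with $\ln(\D/P_g)$ in the numerator.

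The claim about non-critical load follows from a conservation argument. At time $t$ the ReGU produces $P_g(t)$ units of power; the ReGF's hedging portfolio holds $a(t)\le 0$ generation units, which mathematically corresponds to a short position of magnitude $|a(t)|P_g(t)$ in renewable power. This ``released'' power is available for non-critical loads, so in total $P_g(t)+|a(t)|P_g(t)=(1+|a(t)|)P_g(t)$ units can be committed to non-critical demand, while the self-financing constraint $\text{d}a\,P_g+\text{d}b\,P_b=0$ together with the PDE derivation guarantees that $V(P_g(T_f),T_f)$ equals the required shortfall in \eqref{eq:terminal_cond} almost surely.

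The principal obstacle I anticipate is twofold: first, justifying rigorously that the Feynman--Kac formula delivers the \emph{unique} classical solution of the PDE with the given terminal data (this is standard given the linear-growth coefficients of GBM and the Lipschitz payoff, but must be cited cleanly); and second, the algebraic bookkeeping in the delta calculation, where the implicit $P_g$-dependence in $d_1,d_2$ produces contributions that must be shown to cancel. If the direct differentiation becomes unwieldy, the cleaner fallback route is the change of variables $x=\ln(P_g/\D),\ \tau=\tfrac{1}{2}\sigma_g^2(T_f-t)$, which converts \eqref{eq:pde} to the heat equation on $\mathbb{R}$ and lets one read off both $V$ and $\partial V/\partial P_g$ from the Gaussian kernel without delicate cancellation.
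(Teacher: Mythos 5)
Your proposal is correct and reaches the same closed form $V=\D F(-d_2)-P_g F(-d_1)$, but by a genuinely different route. The paper proves the theorem analytically: it changes variables $\tau=T_f-t$, $x=\ln P_g$, substitutes $V=e^{-(\alpha x+\beta\tau)}u$ to reduce~(\ref{eq:pde}) to the heat equation, convolves the terminal data with the Gaussian kernel, and evaluates the two resulting integrals $I_1,I_2$ explicitly; it then obtains $a(t)$ and $b(t)$ by \emph{comparing} the resulting expression for $V$ with the ansatz $V=aP_g+bP_b$ in~(\ref{eq:portfolio}), i.e.\ by reading off the coefficient of $P_g(t)$. You instead identify the problem as a zero-rate Black--Scholes put, invoke Feynman--Kac against the driftless GBM $\text{d}\widetilde P_g=\sigma_g\widetilde P_g\,\text{d}W_t$, and compute the conditional expectation of the payoff directly; this is the probabilistic twin of the paper's kernel computation (the heat kernel \emph{is} the transition density after the log change of variables), so the integrals you evaluate are the same ones up to substitution. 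Two points where your write-up is arguably tighter than the paper's: you actually differentiate $V$ and verify $\partial V/\partial P_g=-F(-d_1)$ via the cancellation identity $P_g e^{-d_1^2/2}=\D e^{-d_2^2/2}$, whereas the paper's ``comparison'' step silently relies on exactly that identity to be consistent with the hypothesis $a=\partial V/\partial P_g$ of the Lemma; and you supply an argument for the non-critical-load claim $(1+|a(t)|)P_g(t)$, which the paper's appendix proof omits entirely. The one item you flag --- uniqueness for Feynman--Kac --- is unproblematic here since the terminal payoff is bounded by $\D$ and Lipschitz, so a standard citation suffices.
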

\begin{proof}
See Subsection~\ref{sec:appdx}.
\end{proof}
\end{section}

\section{Simulation Results}\label{sec:sim}
Consider a hybrid renewable energy system (HRES) with the renewable generation $P_g(t)$ with parameters for geometric brownian motion described by $\mu_g=0.1$ and $\sigma_g=0.3$.  Suppose the battery unit power $P_b=1$ kW. It is desired to provision initial quantities, $a(0)$ and $b(0)$ and devise a policy for $a(t)$ and $b(t)$ to ensure that the power demand of $25$ kW is met at time $T_f=5$ hrs. Toward addressing the problem, we employ the strategy provided by Theorem~\ref{thm:bs_sol} which is implemented in Python 3.8. A total of $300$ samples are taken between 0 to $T_f$. The number of renewable generation units and battery blocks are adjusted at an interval of 1 minute. The random variable $P_g(t)$ is realized using~(\ref{eq:dpower_gbm}). Fig.~\ref{fig:Pg_low_end} considers the scenario where a realization of $P_g(t)$ results in the generation at $T_f$ being not sufficient to meet the power demand, $\D$,  i.e. $P_g(T_f)<\D$. It is evident that the power portfolio value at time $T_f$ becomes equal to the generation deficit given by $\D-P_g(T_f)$ to ensure that $\D$ units of power demand at $T_f$ is met.
Fig.~\ref{fig:Pg_high_end} shows the scenario where a realization of the stochastic process $P_g(t)$ leads to renewable generation at time $T_f$ being more than sufficient to meet the power demand, $\D$. In this scenario, both power portfolio and number of battery block requirement become 0 at $t=T_f$ as is guaranteed by Theorem~\ref{thm:bs_sol}. Thus, irrespective of the uncertainty in $P_g$, the power demand of $\D$ units at time $T_f$ is met almost for every realization. Such a guarantee is essential for applications that are deemed critical.
\begin{figure}[h]
     \centering
     \subfloat[][$P_g(T_f) < \D$]{\includegraphics[scale=1.0,trim={0.4cm 0cm 0cm 0cm},clip]{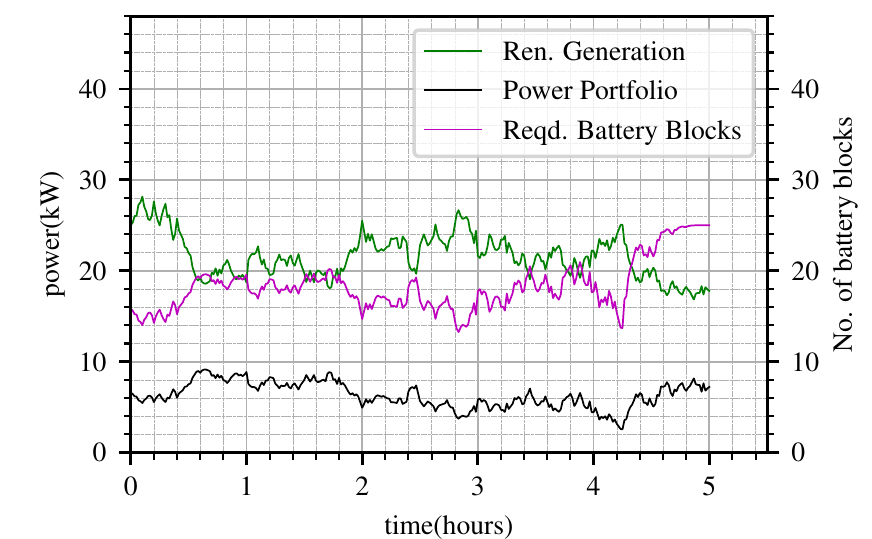}\label{fig:Pg_low_end}}\hspace{0.5cm}
     \subfloat[][$P_g(T_f) > \D$]{\includegraphics[scale=1.0,trim={0.4cm 0cm 0cm 0.07cm},clip]{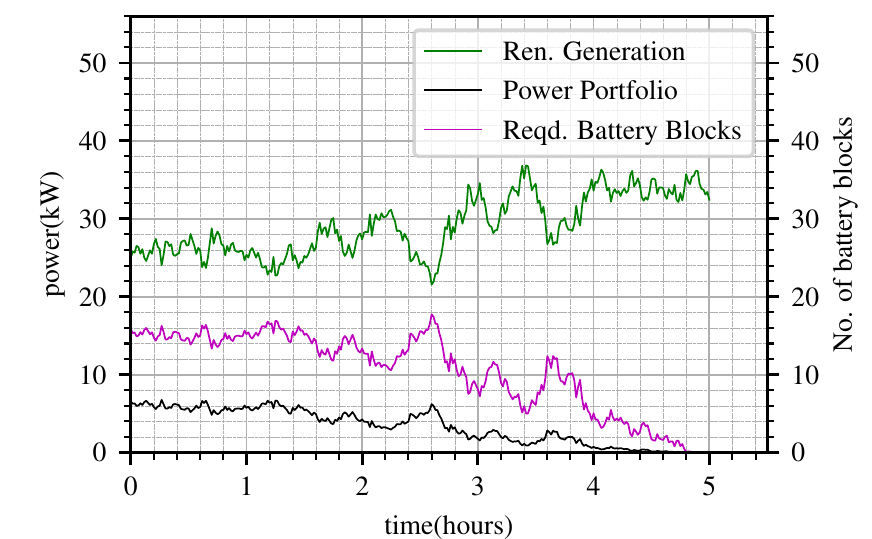}\label{fig:Pg_high_end}}
     \caption{Simulation results for two realizations of $P_g(t)$ based on~(\ref{eq:dpower_gbm}) with $\D=25$ kW, $T_f=5$ hrs}
     \label{fig:Pg_subfig_high_low}
\end{figure}
Fig.~\ref{fig:v_vs_t_same_pg} and Fig.~\ref{fig:b_vs_t_same_pg} consider the effect of time horizon length ($T_f$) on the power portfolio value and battery block requirement, respectively, at any time instant $t \in [0,T_f]$, given a realization of the stochastic process $P_g(t)$ and power demand of $\D=25$ kW at all $T_f \in \{3,4,5\}$ hours. Fig.~\ref{fig:v_vs_t_same_pg} shows that, at any time instant $t$, for same $P_g(t)$, power portfolio value is higher for longer horizon, $T_f$ if power demand of $\D=25$ kW has to be met at $T_f$ almost surely.
Similarly, Fig.~\ref{fig:b_vs_t_same_pg} shows that, to meet the power demand of $\D=25$ kW at $T_f$ with almost sure guarantee, at any time instant $t$, battery block requirement is lower for longer horizon if $P_g(t)<\D$, and higher for longer horizon if $P_g(t) \geq \D$.

\begin{figure}[h!]
    \centering
    \includegraphics[scale=1.0,trim={0.4cm 0cm 0cm 0.1cm},clip]{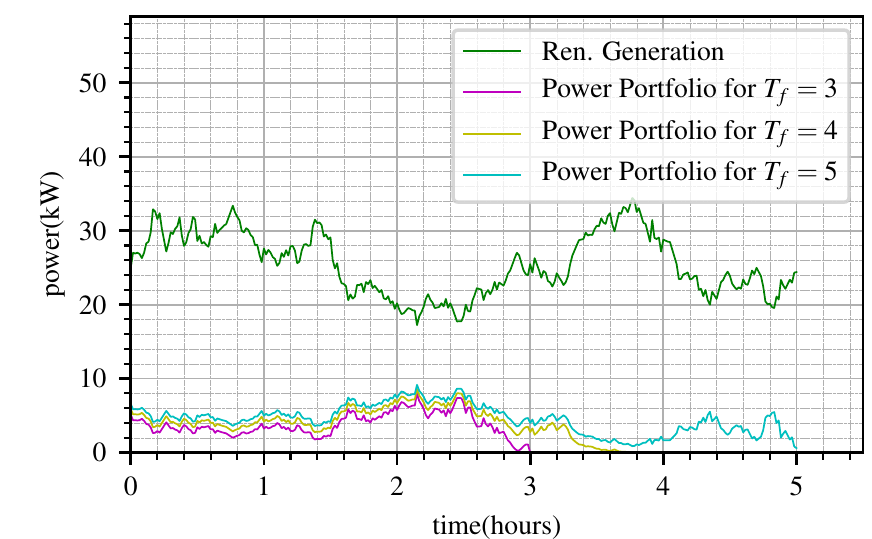}
    \caption{Variation of power portfolio with time for different $T_f$, under same $P_g(t)$ and same load demand of $\D=25$ kW at $T_f$, for a realization of $P_g(t)$ based on~(\ref{eq:dpower_gbm})}
    \label{fig:v_vs_t_same_pg}
\end{figure}
\begin{figure}[h]
    \centering
    \includegraphics[width=\columnwidth,trim={0.4cm 0cm 0cm 0cm},clip]{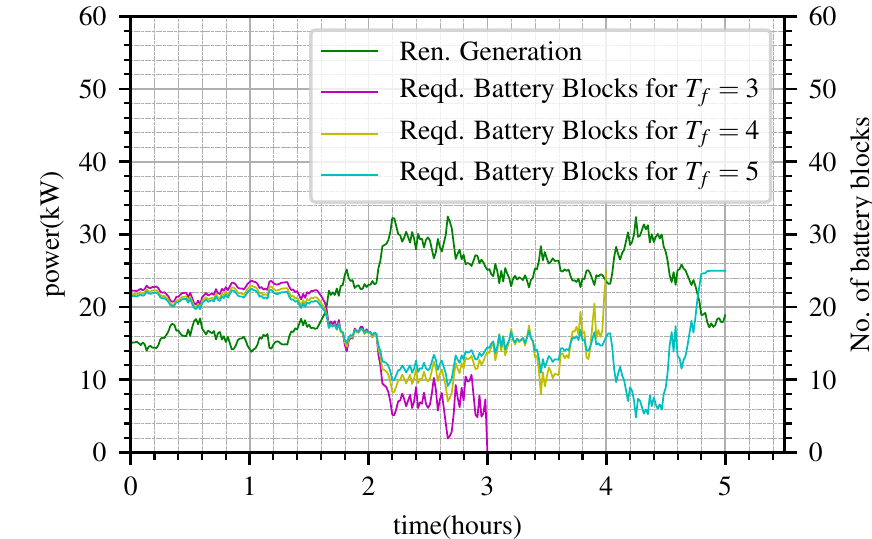}
    \caption{Variation of number of battery blocks with time for different $T_f$, under same $P_g(t)$ and same load demand of $\D=25$ kW at $T_f$, for a realization of $P_g(t)$ based on~(\ref{eq:dpower_gbm})}
    \label{fig:b_vs_t_same_pg}
\end{figure}
    
Implementation of Algorithm~\ref{alg:batteryblocks}, based on a realization of $P_g(t)$ as per~(\ref{eq:dpower_gbm}) is shown in Fig.~\ref{fig:opt_Batt} where the constant power demand of $\D=25$ kW has to be met over a time horizon of 0 to $T=5$ hours. Renewable generation profile is simulated with a total sample size of $300$ and the base power is taken as $25$ kW which is used to calculate per unit power. For an energy mismatch tolerance bound of $\varepsilon=0.01$, the time steps and number of reserve battery blocks are calculated based on~(\ref{eq:intervallength}). The result shows that the battery power follows the actual demand-generation deficit closely. While this method optimizes the battery reserve requirement, it can be augmented with the strategy proposed in Theorem~\ref{thm:bs_sol} to maintain power balance.

\begin{figure}[h!]
    \centering
    \includegraphics[scale=1.0,trim={0.4cm 0cm 0cm 0.1cm},clip]{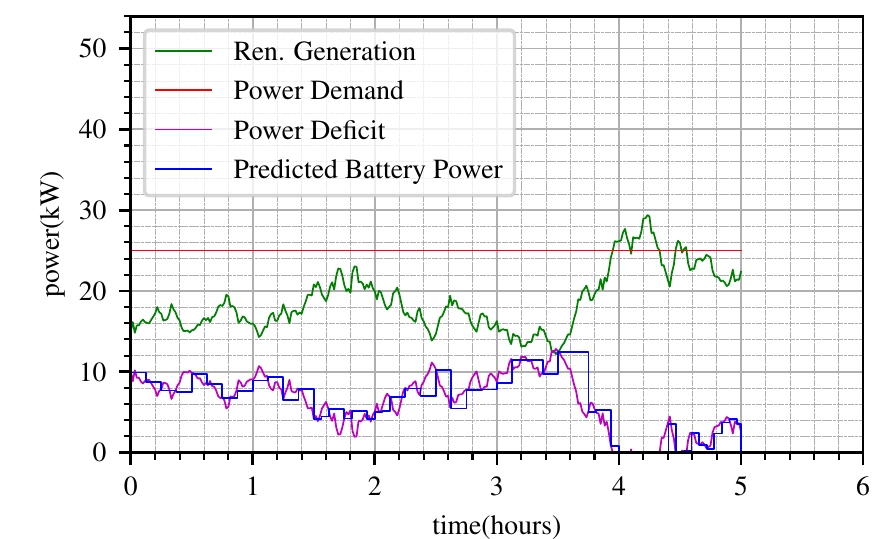}
    \caption{Optimal battery power to meet demand of $\De=25$ kW over a time horizon of $T=5$ hours for a realization of $P_g(t)$ based on~(\ref{eq:dpower_gbm})}
    \label{fig:opt_Batt}
\end{figure}

% %%%%%%%%%%%%%%%%%%%%%%%%%%%%%%%%%%%%%%%%%%%%%%%%%%%%%%%%%%%%%%%%%%%%%%%%%%%%%%%%
\begin{section}{Conclusion}\label{sec:conclusion}
This paper presents an optimal approach for a hybrid renewable sources and battery based system to provide power with minimum variability considering the difference in uncertainties of various renewable power sources. This solution provides a suitable combination of different renewable generation sources which minimizes the overall generation variability. Moreover, to maintain system reliability and sustainability, a strategy to guarantee load demand at a future time instant is also given. This strategy ensures that the demand is met without overproduction or underproduction by optimal allocation and utilization of renewable and battery storage. This article also presents a solution to the problem of meeting constant power demand throughout a specified time horizon by minimizing expected difference in generation and demand. Combined, these three approaches can enable an HRES to mitigate the risks associated with uncertainties of renewable energy sources.  
\end{section}
\subsection{Proof of Theorem~\ref{thm:bs_sol}}\label{sec:appdx}
Let $\tau = T_f - t$, $x = \ln(P_g)$. Therefore,
\begin{align*}
    \textstyle\frac{\partial V}{\partial t} &= - \textstyle\frac{\partial V}{\partial \tau},\ 
    \textstyle\frac{\partial V}{\partial P_g} = \frac{1}{P_g} \textstyle\frac{\partial V}{\partial x},\
    \textstyle\frac{\partial^2 V}{\partial P_g^2} = \textstyle\frac{1}{P_g^2}\Big[\frac{\partial^2 V}{\partial x^2}-\textstyle\frac{\partial V}{\partial x}\Big].
    \end{align*}
From~(\ref{eq:pde}),
\begin{equation}\label{eq:pde_1}
\begin{split}
    \textstyle\frac{\partial V}{\partial \tau} &= \frac{1}{2} \sigma_g^2 \textstyle\frac{\partial^2 V}{\partial x^2} - \textstyle\frac{1}{2} \sigma_g^2 \textstyle\frac{\partial V}{\partial x} = A \textstyle\frac{\partial^2 V}{\partial x^2} + B \textstyle\frac{\partial V}{\partial x},
\end{split}
\end{equation}
where, $A = \frac{1}{2} \sigma_g^2\ (A > 0),\ B = - \frac{1}{2} \sigma_g^2$. From~(\ref{eq:terminal_cond}), the final condition on power portfolio $V(P_g(t), t=T_f)$, or equivalently, initial condition on $V(x, \tau=0)$ is given as:
\begin{equation}\label{eq:terminal_cond_changed_var}
    V(x, 0) = 
    \begin{cases}
    0                    & \text{if $x \geq \ln(\D)$}\\
    \D - e^x            & \text{if $x < \ln(\D)$}.
    \end{cases}
\end{equation}
Let $V(x, \tau) = e^{-(\alpha x + \beta \tau)}u(x, \tau)$, where $\alpha, \beta \in \mathbb{R}$. Then,
\begin{equation*}
    \textstyle\frac{\partial V}{\partial \tau} = e^{-(\alpha x + \beta \tau)}\left[ \textstyle\frac{\partial u}{\partial \tau} - \beta u\right].
\end{equation*}
\begin{align*}
    \mbox{Similarly,} \ \ \ \ \ \ \ \ \ \ \  \textstyle\frac{\partial V}{\partial x} &= e^{-(\alpha x + \beta \tau)}\left[ \textstyle\frac{\partial u}{\partial x} - \alpha u\right]\\
    \textstyle\frac{\partial^2 V}{\partial x^2} &= e^{-(\alpha x + \beta \tau)} \left[ \textstyle\frac{\partial^2 u}{\partial x^2} - 2\alpha \textstyle\frac{\partial u}{\partial x} + \alpha^2 u\right].
\end{align*}
Therefore, using~(\ref{eq:pde_1}),
\begin{align*}
    &\textstyle\frac{\partial u}{\partial \tau} - \beta u = A\left[ \textstyle\frac{\partial^2 u}{\partial x^2} - 2\alpha \textstyle\frac{\partial u}{\partial x} + \alpha^2 u \right] + B\left[ \textstyle\frac{\partial u}{\partial x} - \alpha u \right]\\
        \implies & \textstyle\frac{\partial u}{\partial \tau} = A\textstyle\frac{\partial^2 u}{\partial x^2} + \left[ B - 2\alpha A\right]\textstyle\frac{\partial u}{\partial x} + \left[ \beta + \alpha^2 A - \alpha B\right]u.
\end{align*}
Choosing $\alpha = \frac{B}{2A}$ and $\beta = \frac{B^2}{4A}$, we get,
$\textstyle\frac{\partial u}{\partial \tau} = A\textstyle\frac{\partial^2 u}{\partial x^2}.$

\vspace*{0.2cm}
\noindent The solution of this PDE \cite{evans2010partial} is given by,
\begin{align*}
    \textstyle u(x,\tau) =& \textstyle\frac{1}{\sqrt{4\pi A\tau}}\int_{-\infty}^{\infty} \textstyle u(y,0) e^{-\frac{(x-y)^2}{4A\tau}}\text{d}y.
\end{align*}
Since, $V(y,0) = e^{-(\alpha y)}u(y, 0)$, $\alpha = \textstyle\frac{B}{2A}$, $\beta = \textstyle\frac{B^2}{4A}$,
\begin{align}\label{eq:heat_eqn}
    &\textstyle u(x,\tau) = \frac{1}{\sqrt{4\pi A\tau}} \textstyle\int_{-\infty}^{\infty} e^{-\frac{(x-y)^2}{4A\tau}} e^{\alpha y}V(y,0)\text{d}y \nonumber  \\
    &= \textstyle\frac{1}{\sqrt{4\pi A\tau}}\textstyle\int_{-\infty}^{\infty} e^{-\left[ -\frac{By}{2A} + \frac{(x-y)^2}{4A\tau}\right]}V(y,0)\text{d}y \nonumber \\
    &= \textstyle\frac{1}{\sqrt{4\pi A\tau}}\textstyle\int_{-\infty}^{\infty} e^{-\left[ \textstyle\frac{(y-B\tau-x)^2}{4A\tau} -\textstyle\frac{Bx}{2A} -\textstyle\frac{B^2\tau}{4A}\right]}V(y,0)\text{d}y \nonumber \\
    &= \textstyle\frac{1}{\sqrt{4\pi A\tau}}\textstyle\int_{-\infty}^{\infty} e^{-\left[ \frac{(y-B\tau-x)^2}{4A\tau} -\alpha x -\beta \tau\right]}V(y,0)\text{d}y \nonumber \\
    &= \textstyle\frac{1}{\sqrt{4\pi A\tau}}\textstyle\int_{-\infty}^{\infty} e^{-\left[ \frac{(y-B\tau-x)^2}{4A\tau}\right]}e^{(\alpha x + \beta \tau)}V(y,0)\text{d}y.
\end{align}
Therefore, $V(x,\tau)$ is given by,
\begin{align}\label{eq:pde_3}
    & \textstyle e^{-(\alpha x + \beta \tau)}u(x, \tau) \nonumber\\
        &= \textstyle\frac{1}{\sqrt{4\pi A\tau}}\int_{-\infty}^{\infty} e^{-\left[ \frac{(y-B\tau-x)}{2\sqrt{A\tau}}\right]^2} V(y,0)\text{d}y \nonumber\\
        &= \textstyle\frac{1}{\sqrt{2\pi\tau} \sigma_g}\int_{-\infty}^{\infty} e^{-[ (y+\textstyle\frac{\sigma_g^2}{2}\tau-x)/(\sqrt{2}\textstyle\sigma_g\tau)]^2} V(y,0)\text{d}y \nonumber\\
        &= \textstyle\frac{1}{\sigma_g \sqrt{2\pi \tau}}\int_{-\infty}^{\infty} e^{-\frac{1}{2}\Big[ \frac{(y+\frac{\sigma_g^2}{2}\tau-x)}{\sigma_g\sqrt{\tau}}\Big]^2} V(y,0)\text{d}y.
\end{align}
Applying the initial condition~(\ref{eq:terminal_cond_changed_var}) on $V(y,0)$ we get:
\begin{align}
    & V(x, \tau) =  \nonumber\\
    & \textstyle\frac{1}{\sigma_g \sqrt{2\pi \tau}}\Big[\int_{-\infty}^{\ln(\D)} e^{-\frac{1}{2}\big[ \frac{(y+\frac{\sigma_g^2}{2}\tau-x)}{\sigma_g\sqrt{\tau}}\big]^2}(\D - e^y)\text{d}y\Big]\nonumber \nonumber\\
    &= \underbrace{\textstyle\frac{\D}{\sigma_g \sqrt{2\pi \tau}}\Big[\int_{-\infty}^{\ln(\D)} e^{-\frac{1}{2}\big[ \frac{(y+\frac{\sigma_g^2}{2}\tau-x)}{\sigma_g\sqrt{\tau}}\big]^2}\text{d}y\Big]}_{I_1} \nonumber\\
    & \hspace{1cm} - \underbrace{\textstyle\frac{1}{\sigma_g \sqrt{2\pi \tau}}\Big[\int_{-\infty}^{\ln(\D)} e^{-\frac{1}{2}\big[ \frac{(y+\frac{\sigma_g^2}{2}\tau-x)}{\sigma_g\sqrt{\tau}}\big]^2}e^y\text{d}y\Big]}_{I_2}. \nonumber%\\
\end{align}

Let, $z=\frac{1}{\sigma_g\sqrt{\tau}}(y+(\sigma_g^2/2)\tau-x) \implies \text{d}z = \frac{1}{\sigma_g\sqrt{\tau}}\text{d}y$. Therefore,
\begin{align}
    I_1 &= \textstyle\frac{\D}{\sigma_g \sqrt{2\pi \tau}}\Big[\int_{-\infty}^{\ln(\D)} e^{-\frac{1}{2}\big[ \frac{(y+\frac{\sigma_g^2}{2}\tau-x)}{\sigma_g\sqrt{\tau}}\big]^2}\text{d}y\Big]\nonumber\\
    &= \textstyle\D\Big[ \frac{1}{\sqrt{2\pi}} \int_{-\infty}^{\frac{\ln(\D)+\frac{\sigma_g^2}{2}\tau - x}{\sigma_g \sqrt{\tau}}} e^{-\frac{1}{2}z^2}\text{d}z \Big]\\
    &= \textstyle\D F\Big[ \frac{\ln\left(\frac{\D}{P_g}\right)+\frac{\sigma_g^2}{2}\tau}{\sigma_g \sqrt{\tau}}\Big],\nonumber
\end{align}
where, $F(.)$ is the Cumulative Distribution Function of the standard Gaussian random variable $\sim \mathcal{N}(0,1)$. Similarly,
\begin{align}\label{eq:pde_i2}
    I_2 &= \textstyle\frac{1}{\sigma_g \sqrt{2\pi \tau}}\Big[\int_{-\infty}^{\ln(\D)} e^{-\frac{1}{2}\big[ \frac{(y+\frac{\sigma_g^2}{2}\tau-x)}{\sigma_g\sqrt{\tau}}\big]^2}e^y\text{d}y\Big]\nonumber\\
        &= \textstyle\frac{e^x}{\sqrt{2\pi}} \int_{-\infty}^{\frac{\ln(\D)+\frac{\sigma_g^2}{2}\tau - x}{\sigma_g \sqrt{\tau}}} e^{-\frac{1}{2}\left[ z - \sigma_g \sqrt{\tau}\right]^2}\text{d}z.
\end{align}
Let, $w = z-\sigma_g \sqrt{\tau}, \implies dw = dz$. Therefore, from (\ref{eq:pde_i2}),
\begin{align}
     I_2 &= \textstyle\frac{e^x}{\sqrt{2\pi}} \int_{-\infty}^{\frac{\ln(\D)+\frac{\sigma_g^2}{2}\tau - x - \sigma_g^2 \tau}{\sigma_g \sqrt{\tau}}} e^{-\frac{1}{2}w^2}\text{d}w \nonumber\\
        &= \textstyle P_gF\Big[\frac{\ln\left(\frac{\D}{P_g}\right)-\frac{\sigma_g^2}{2}\tau}{\sigma_g \sqrt{\tau}}\Big],\nonumber
\end{align}
where, $F(.)$ is the Cumulative Distribution Function of the standard Gaussian random variable $\sim \mathcal{N}(0,1)$. Therefore, substituting the value of $\tau$,
\begin{align}\label{eq:pde_sol}
    V(P_g(t), t) &= \textstyle(\D)F\Big[ \frac{\ln\left(\frac{\D}{P_g(t)}\right)+\frac{\sigma_g^2}{2}(T_f - t)}{\sigma_g \sqrt{(T_f - t)}}\Big] \nonumber \\ &-  \textstyle P_g(t)F\Big[\frac{\ln\left(\frac{\D}{P_g(t)}\right)-\frac{\sigma_g^2}{2}(T_f - t)}{\sigma_g \sqrt{(T_f - t)}}\Big].
\end{align}
Comparing equation (\ref{eq:pde_sol}) with equation (\ref{eq:portfolio}),
\begin{align}\label{eq:batt_policy}
    a(t) &=\textstyle -F\Big[\frac{\ln\left(\frac{\D}{P_g(t)}\right)-\frac{\sigma_g^2}{2}(T_f - t)}{\sigma_g \sqrt{(T_f - t)}}\Big],\\ \nonumber
    b(t) &= \textstyle\frac{\D}{P_b}F\Big[ \frac{\ln\left(\frac{\D}{P_g(t)}\right)+\frac{\sigma_g^2}{2}(T_f - t)}{\sigma_g \sqrt{(T_f - t)}}\Big].
\end{align}
This completes the proof.

\bibliography{references}

% Generated by IEEEtran.bst, version: 1.14 (2015/08/26)
\begin{thebibliography}{10}
\providecommand{\url}[1]{#1}
\csname url@samestyle\endcsname
\providecommand{\newblock}{\relax}
\providecommand{\bibinfo}[2]{#2}
\providecommand{\BIBentrySTDinterwordspacing}{\spaceskip=0pt\relax}
\providecommand{\BIBentryALTinterwordstretchfactor}{4}
\providecommand{\BIBentryALTinterwordspacing}{\spaceskip=\fontdimen2\font plus
\BIBentryALTinterwordstretchfactor\fontdimen3\font minus
  \fontdimen4\font\relax}
\providecommand{\BIBforeignlanguage}[2]{{%
\expandafter\ifx\csname l@#1\endcsname\relax
\typeout{** WARNING: IEEEtran.bst: No hyphenation pattern has been}%
\typeout{** loaded for the language `#1'. Using the pattern for}%
\typeout{** the default language instead.}%
\else
\language=\csname l@#1\endcsname
\fi
#2}}
\providecommand{\BIBdecl}{\relax}
\BIBdecl

\bibitem{shrestha1998study}
G.~Shrestha and L.~Goel, ``A study on optimal sizing of stand-alone
  photovoltaic stations,'' \emph{IEEE Transactions on Energy Conversion},
  vol.~13, no.~4, pp. 373--378, 1998.

\bibitem{miranda2015holistic}
I.~Miranda, N.~Silva, and H.~Leite, ``A holistic approach to the integration of
  battery energy storage systems in island electric grids with high wind
  penetration,'' \emph{IEEE Transactions on Sustainable Energy}, vol.~7, no.~2,
  pp. 775--785, 2015.

\bibitem{parhizi2015state}
S.~Parhizi, H.~Lotfi, A.~Khodaei, and S.~Bahramirad, ``State of the art in
  research on microgrids: A review,'' \emph{IEEE Access}, vol.~3, pp. 890--925,
  2015.

\bibitem{irena}
\BIBentryALTinterwordspacing
IRENA, ``Battery storage for renewables: market status and technology
  outlook,'' 2015, [Accessed 24 March 2020]. [Online]. Available:
  \url{https://www.irena.org/documentdownloads/publications/irena_battery_storage_report_2015.pdf}
\BIBentrySTDinterwordspacing

\bibitem{yang2018battery}
Y.~Yang, S.~Bremner, C.~Menictas, and M.~Kay, ``Battery energy storage system
  size determination in renewable energy systems: A review,'' \emph{Renewable
  and Sustainable Energy Reviews}, vol.~91, pp. 109--125, 2018.

\bibitem{kishore2018mortality}
N.~Kishore, D.~Marqu{\'e}s, A.~Mahmud, M.~V. Kiang, I.~Rodriguez, A.~Fuller,
  P.~Ebner, C.~Sorensen, F.~Racy, J.~Lemery \emph{et~al.}, ``Mortality in
  puerto rico after hurricane maria,'' \emph{New England journal of medicine},
  vol. 379, no.~2, pp. 162--170, 2018.

\bibitem{campbell2012weather}
R.~J. Campbell and S.~Lowry, ``Weather-related power outages and electric
  system resiliency.''\hskip 1em plus 0.5em minus 0.4em\relax Congressional
  Research Service, Library of Congress Washington, DC, 2012.

\bibitem{verdejo2016stochastic}
H.~Verdejo, A.~Awerkin, E.~Saavedra, W.~Kliemann, and L.~Vargas, ``Stochastic
  modeling to represent wind power generation and demand in electric power
  system based on real data,'' \emph{Applied Energy}, vol. 173, pp. 283--295,
  2016.

\bibitem{olsson2010modeling}
M.~Olsson, M.~Perninge, and L.~S{\"o}der, ``Modeling real-time balancing power
  demands in wind power systems using stochastic differential equations,''
  \emph{Electric Power Systems Research}, vol.~80, no.~8, pp. 966--974, 2010.

\bibitem{dong2016application}
J.~Dong, A.~A. Malikopoulos, S.~M. Djouadi, and T.~Kuruganti, ``Application of
  optimal production control theory for home energy management in a micro
  grid,'' in \emph{2016 American Control Conference (ACC)}.\hskip 1em plus
  0.5em minus 0.4em\relax IEEE, 2016, pp. 5014--5019.

\bibitem{salameh1995photovoltaic}
Z.~M. Salameh, B.~S. Borowy, and A.~R. Amin, ``Photovoltaic module-site
  matching based on the capacity factors,'' \emph{IEEE transactions on Energy
  conversion}, vol.~10, no.~2, pp. 326--332, 1995.

\bibitem{faccio2018state}
M.~Faccio, M.~Gamberi, M.~Bortolini, and M.~Nedaei, ``State-of-art review of
  the optimization methods to design the configuration of hybrid renewable
  energy systems (hress),'' \emph{Frontiers in Energy}, vol.~12, no.~4, pp.
  591--622, 2018.

\bibitem{birnie2014optimal}
D.~P. Birnie~III, ``Optimal battery sizing for storm-resilient photovoltaic
  power island systems,'' \emph{Solar energy}, vol. 109, pp. 165--173, 2014.

\bibitem{olatomiwa2016energy}
L.~Olatomiwa, S.~Mekhilef, M.~S. Ismail, and M.~Moghavvemi, ``Energy management
  strategies in hybrid renewable energy systems: A review,'' \emph{Renewable
  and Sustainable Energy Reviews}, vol.~62, pp. 821--835, 2016.

\bibitem{zhang2016optimal}
Y.~Zhang, Z.~Y. Dong, F.~Luo, Y.~Zheng, K.~Meng, and K.~P. Wong, ``Optimal
  allocation of battery energy storage systems in distribution networks with
  high wind power penetration,'' \emph{IET Renewable Power Generation},
  vol.~10, no.~8, pp. 1105--1113, 2016.

\bibitem{maleki2016optimal}
A.~Maleki, M.~G. Khajeh, and M.~Ameri, ``Optimal sizing of a grid independent
  hybrid renewable energy system incorporating resource uncertainty, and load
  uncertainty,'' \emph{International Journal of Electrical Power \& Energy
  Systems}, vol.~83, pp. 514--524, 2016.

\bibitem{geem2012size}
Z.~W. Geem, ``Size optimization for a hybrid photovoltaic--wind energy
  system,'' \emph{International Journal of Electrical Power \& Energy Systems},
  vol.~42, no.~1, pp. 448--451, 2012.

\bibitem{ghaffari2015energy}
R.~Ghaffari and B.~Venkatesh, ``Energy reserve trade optimization for wind
  generators using black and scholes options in small-size power systems,''
  \emph{Canadian Journal of Electrical and Computer Engineering}, vol.~38,
  no.~2, pp. 66--76, 2015.

\bibitem{hedman2006comparing}
K.~W. Hedman and G.~B. Shebl{\'e}, ``Comparing hedging methods for wind power:
  Using pumped storage hydro units vs. options purchasing,'' in \emph{2006
  International Conference on Probabilistic Methods Applied to Power
  Systems}.\hskip 1em plus 0.5em minus 0.4em\relax IEEE, 2006, pp. 1--6.

\bibitem{cplex}
\BIBentryALTinterwordspacing
IBM, ``Ibm (2017) ibm ilog cplex 12.7 user’s manual (ibm ilog cplex division,
  incline village, nv),'' [Accessed 25 March 2020]. [Online]. Available:
  \url{https://www.ibm.com/analytics/cplex-optimizer}
\BIBentrySTDinterwordspacing

\bibitem{gurobi}
\BIBentryALTinterwordspacing
L.~Gurobi~Optimization, ``Gurobi optimizer reference manual,'' 2020. [Online].
  Available: \url{http://www.gurobi.com}
\BIBentrySTDinterwordspacing

\bibitem{mosek}
\BIBentryALTinterwordspacing
M.~ApS, \emph{The MOSEK optimization toolbox for MATLAB manual. Version 9.0.},
  2019. [Online]. Available: \url{http://docs.mosek.com/9.0/toolbox/index.html}
\BIBentrySTDinterwordspacing

\bibitem{he2013water}
P.~He, L.~Zhao, S.~Zhou, and Z.~Niu, ``Water-filling: A geometric approach and
  its application to solve generalized radio resource allocation problems,''
  \emph{IEEE transactions on Wireless Communications}, vol.~12, no.~7, pp.
  3637--3647, 2013.

\bibitem{gardiner2009stochastic}
C.~Gardiner, \emph{Stochastic methods}.\hskip 1em plus 0.5em minus 0.4em\relax
  Springer Berlin, 2009, vol.~4.

\bibitem{black1973pricing}
F.~Black and M.~Scholes, ``The pricing of options and corporate liabilities,''
  \emph{Journal of political economy}, vol.~81, no.~3, pp. 637--654, 1973.

\bibitem{evans2010partial}
L.~C. Evans, \emph{Partial differential equations}.\hskip 1em plus 0.5em minus
  0.4em\relax American Mathematical Soc., 2010, vol.~19.

\end{thebibliography}
\end{document}